\newcommand{\bbeta}{\boldsymbol\beta}
\newcommand{\ggamma}{\boldsymbol\gamma}
\newcommand{\xx}{\mathbf{x}}
\newcommand{\yy}{\mathbf{y}}
\newcommand{\XX}{\mathbf{X}}
\newcommand{\llog}{\mathrm{log}}
\newcommand{\sigmamax}{\sigma_{\mathrm{max}}}
\newcommand{\dd}{\mathrm{d}}
\newtheorem{lemma}{Lemma}
\author{Alireza S. Mahani\\ Scientific Computing Group \\ Sentrana Inc. \And
Mansour T.A. Sharabiani\\ National Heart and Lung Institute \\ Imperial College London}
\title{Multivariate-from-Univariate MCMC Sampler: \proglang{R} Package \pkg{MfUSampler}}
\email{alireza.mahani@sentrana.com}\\
\begin{document}
\Sconcordance{concordance:MfUSampler.tex:MfUSampler.Rnw:%
1 84 1 1 0 4 1 1 4 42 1 1 5 7 0 1 2 27 1 1 2 1 0 1 1 3 0 1 2 1 %
6 8 0 2 2 1 0 4 1 3 0 2 2 1 0 2 1 1 4 6 0 2 2 1 0 2 1 11 0 1 2 %
3 1 1 7 9 0 2 2 1 0 1 1 1 4 3 0 2 1 11 0 1 2 16 1 1 6 8 0 1 2 %
1 8 10 0 1 2 1 3 2 0 8 1 1 2 1 0 3 1 1 4 3 0 2 1 5 0 4 1 10 0 %
1 1 6 0 1 2 27 1 1 2 1 0 1 1 1 3 2 0 1 3 2 0 1 4 3 0 1 4 6 0 2 %
2 1 0 5 1 1 10 9 0 2 1 5 0 4 1 10 0 1 1 6 0 1 2 8 1}



\section{Introduction}\label{section-introduction}
Bayesian inference software such as \proglang{Stan}~\citep{stan-software:2014}, \proglang{OpenBUGS}~\citep{thomas2006making}, and \proglang{JAGS}~\citep{plummer2004jags} provide high-level, domain-specific languages (DSLs) to specify and sample from probabilistic Directed Acyclic Graphs (DAGs). In some Bayesian projects, the convenience of using such DSLs comes at the price of reduced flexibility in model specification, and suboptimality of the underlying sampling algorithms used by the compilers. Furthermore, for large projects the end-goal might be to implement all or part of the sampling algorithm in a high-performance - perhaps parallel - language. In such cases, researchers may choose to start their development work by `rolling their own' joint probability distributions from the DAG specification, followed by application of their choice of a sampling algorithm to the joint distribution.

Many Monte Carlo Markov Chain (MCMC) algorithms have been proposed over the years for sampling from complex posterior distributions. Perhaps the most widely-known algorithm is Metropolis~\citep{metropolis1953equation} and its generalization, Metropolis-Hastings (MH)~\citep{hastings1970monte}. These multivariate algorithms are very easy to implement, but they can be slow to converge without a carefully-selected proposal distribution. A particular flavor of MH is the Stochastic Newton Sampler~\citep{qi2002hessian}, where the proposal distribution is a multivariate Gaussian based on the second-order Taylor series expansion of the log-probability. This method has been implemented in the \proglang{R} package \pkg{sns}~\citep{mahani2014sns}. This algorithm can be quite effective for twice-differentiable, log-concave distributions such as those encountered in Generlized Linear Regression (GLM) problems. Hamiltonian Monte Carlo (HMC) algorithms~\citep{girolami2011riemann,neal2011mcmc} have also gained popularity due to development of techniques for automated tuning of their parameters~\citep{hoffman2014no}.

Univariate samplers tend to have few tuning parameters and thus well suited for black-box MCMC software. Two important examples are Adaptive Rejection Sampling~\citep{gilks1992adaptive} (or ARS) and Slice Sampling~\citep{neal2003slice}. ARS requires log-density to be concave, and needs the first derivative, while slice sampler is generic and derivative-free. To apply these univariate samplers to multivariate distributions, they must be applied one-coordinate-at-a-time according to the Gibbs sampling algorithm~\citep{geman1984stochastic}, where at the end of each univariate step the sampled value is used to update the conditional distribution for the next coordinate. \pkg{MfUSampler} encapsulates this logic into a library function, providing a fast and reliable path towards Bayesian model estimation for researchers working on novel DAG specifications.

When posterior distribution exhibits strong correlation structure, one-coordinate-at-a-time algorithms can become inefficient as they fail to capture important geometry of the space~\citep{girolami2011riemann}. This has been a key motivation for research on black-box multivariate samplers, such as adaptations of slice sampler~\citep{thompson2011slice}.

The rest of this article is organized as follows. In Section~\ref{section-implementation} we provide a brief overview of the extended Gibbs sampling framework used in \pkg{MfUSampler}. In Section~\ref{section-usage} we illustrate how to use the software with an example. Section~\ref{section-workflow} shows how \pkg{MfUSampler} can be used to expedite the prototyping stage in Bayesian modeling problems. Finally, Section~\ref{section-summary} provides a summary and concluding remarks.

\section{Implementation}\label{section-implementation}
\pkg{MfUSampler} relies on three components:
\begin{enumerate}
\item Univariate MCMC samplers: As of version 0.9.1 of \pkg{MfUSampler}, two such samplers are supported: Univariate Slice Sampler with Stepout and Shrinkage~\citep{neal2003slice} and Adaptive Rejection Sampler~\citep{gilks1992adaptive}. For slice sampler, we have imported - with small modifications - Radford Neal's \proglang{R} code, posted on his website\footnote{\url{http://www.cs.toronto.edu/~radford/ftp/slice-R-prog}}, while for ARS we use the \proglang{R} package \pkg{ars}~\citep{rodrigues2014ars}. For technical details on the univariate sampling algorithms, see aforementioned publications or statistical textbooks~\citep{robert1999monte}.
\item Main sampling routine, \code{MfU.Sample}: This function is essentially a \code{for} loop for applying the underlying univariate sampler to each coordinate of the multivariate distribution. We refer to this as `extended Gibbs sampling' (Section~\ref{subsection-extended-gibbs}). The function \code{MfU.Control} allows the user to set the tuning parameters of the univariate sampler. Both \code{MfU.Sample} and \code{MfU.Control} are public functions, and their detailed behavior can be examined by consulting the package documentation.
\item Wrapper functions \code{MfU.fEval}, \code{MfU.fgEval.f} and \code{MfU.fgEval.g} (all internal functions) that return the conditional distribution and its gradients for each coordinate, using the underlying joint distribution and its gradient vector (Section~\ref{subsection-conditional-joint}).
\end{enumerate}

\subsection{Extended Gibbs Sampling}\label{subsection-extended-gibbs}
The \code{for} loop inside \code{MfU.Sample} is a direct implementation of Gibbs sampling~\citep{bishop2006pattern}, with one conceptual extension: rather than requiring an independent sample from each coordinate's conditional distribution, we expect a Markov transition for which the conditional distribution is an invariant distribution. Among the current univariate samplers implemented in \pkg{MfUSampler}, Adaptive Rejection Sampler produces a standard Gibbs sampler while the Slice Sampler falls under the `extended' Gibbs sampler. The following lemma provides the proof of invariance. (For a discussion of ergodicity for slice sampler, see ~\cite{roberts1999convergence}).
\begin{lemma}
If a coordinate-wise Markov transition leaves the conditional distribution invariant, it will also leave the joint distribution invariant.
\end{lemma}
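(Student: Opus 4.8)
The plan is to make precise the notion of a ``coordinate-wise'' Markov transition and then reduce the joint-invariance identity to the assumed conditional-invariance identity by a single change in the order of integration. Write the state as $\xx = (x_1, \dots, x_d)$, fix the coordinate being updated, say the $k$-th, and split $\xx = (x_k, \xx_{-k})$ where $\xx_{-k}$ collects the remaining coordinates. Let $p(\xx)$ denote the target joint density, $p(x_k \mid \xx_{-k})$ its conditional along coordinate $k$, and $p(\xx_{-k})$ the corresponding marginal, so that $p(\xx) = p(x_k \mid \xx_{-k})\, p(\xx_{-k})$. A coordinate-wise transition has kernel $T(\xx \to \xx') = K(x_k \to x_k' \mid \xx_{-k})\,\delta(\xx'_{-k} - \xx_{-k})$, i.e.\ it leaves every coordinate but the $k$-th untouched, and $K$ is the univariate kernel whose hypothesized invariance of the conditional reads $\int K(x_k \to x_k' \mid \xx_{-k})\, p(x_k \mid \xx_{-k})\, \dd x_k = p(x_k' \mid \xx_{-k})$ for every $\xx_{-k}$ and every $x_k'$.

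First I would write out the defining integral for joint invariance, $\int T(\xx \to \xx')\, p(\xx)\, \dd \xx$, and carry out the integration over $\xx_{-k}$ first. Because $T$ contains the factor $\delta(\xx'_{-k} - \xx_{-k})$, this integration simply sets $\xx_{-k} = \xx'_{-k}$ everywhere, collapsing the expression to $\int K(x_k \to x_k' \mid \xx'_{-k})\, p(x_k, \xx'_{-k})\, \dd x_k$. Next I would factor the joint inside the remaining integral as $p(x_k, \xx'_{-k}) = p(x_k \mid \xx'_{-k})\, p(\xx'_{-k})$ and pull the marginal $p(\xx'_{-k})$, which does not depend on the integration variable $x_k$, outside the integral. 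At that point the inner integral is exactly the left-hand side of the hypothesis evaluated at the fixed conditioning value $\xx'_{-k}$, so it equals $p(x_k' \mid \xx'_{-k})$; multiplying back by the marginal recovers $p(x_k' \mid \xx'_{-k})\, p(\xx'_{-k}) = p(\xx')$, which is precisely the statement that $p$ is invariant under $T$.

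The entire conceptual content sits in the second step: the marginal over the held-fixed coordinates is not altered by a coordinate-wise move, so it factors out and the problem reduces to invariance in the single active coordinate, which is exactly what is assumed. I therefore do not expect a genuine obstacle; the only thing requiring care is the bookkeeping around the coordinates that are held fixed. In a fully rigorous, measure-theoretic treatment I would replace the heuristic Dirac factor $\delta(\xx'_{-k} - \xx_{-k})$ with a transition kernel that acts as the identity on the $\xx_{-k}$ block, and justify the interchange of integration via Fubini's theorem under the mild assumption that $p$ is a genuine probability density; the argument is otherwise unchanged. Finally, since one full sweep of \code{MfU.Sample} is the composition of these coordinate-wise transitions and each leaves $p(\xx)$ invariant, the whole cycle leaves $p(\xx)$ invariant as well, which is the guarantee the extended Gibbs sampler requires.
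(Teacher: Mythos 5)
Your proof is correct and follows essentially the same route as the paper's: factor the joint as $p(x_k,\xx_{\setminus k}) = p(x_k \mid \xx_{\setminus k})\,p(\xx_{\setminus k})$, pull the marginal out of the integral since the coordinate-wise move leaves $\xx_{\setminus k}$ fixed, and invoke the assumed invariance of the conditional. The only difference is presentational: you make the full-space kernel explicit via a Dirac factor on the untouched coordinates and integrate it out, whereas the paper writes the joint-invariance identity already collapsed onto the active coordinate; the essential step is identical.
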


\begin{proof}
The premise can be mathematically expressed as
\begin{equation}\label{equation-premise}
p(x'_k | \xx_{\setminus k}) = \int_{x_k} T(x'_k, x_k | \xx_{\setminus k}) p(x_k | \xx_{\setminus k}) \, \dd x_k,
\end{equation}
while the conclusion can be expressed as
\begin{equation}\label{equation-conclusion}
p(x'_k , \xx_{\setminus k}) = \int_{x_k} T(x'_k, x_k | \xx_{\setminus k}) p(x_k , \xx_{\setminus k}) \, \dd x_k.
\end{equation}
In the above $\xx_{\setminus k}$ denotes all coordinates except for $x_k$ and $T(x'_k, x_k | \xx_{\setminus k})$ denotes the coordinate-wise Markov transition density from $x'_k$ to $x_k$. Employing the Product Rule of Probability, we have $p(x_k , \xx_{\setminus k}) = p(x'_k | \xx_{\setminus k}) \times p(\xx_{\setminus k})$. Since the coordinate-wise Markov transition does not change $\xx_{\setminus k}$, we can factor $p(\xx_{\setminus k})$ out of the integral, thereby easily reducing Equation ~\ref{equation-conclusion} to Equation~\ref{equation-premise}.
\end{proof}
Note that standard Gibbs sampling is a special case of the above lemma where $T(x'_k, x_k | \xx_{\setminus k}) = p(x'_k | \xx_{\setminus k})$. (The reader can easily verify that this special transition density satifies the premise.) A full Gibbs cycle is simply a succession of coordinate-wise Markov transitions, and since each one leaves the target distribution invariant according to the above lemma, same is true of the resulting composite Markov transition density.

\subsection{Producing the Conditional Distributions}\label{subsection-conditional-joint}
The internal function \code{MfU.fEval} is responsible for producing coordinate-wise conditional distributions used by the slice sampler:
\begin{Schunk}
\begin{Sinput}
R> MfU.fEval <- function(xk, k, x, f, ...) {
+    x[k] <- xk
+    return (f(x, ...))
+  }
\end{Sinput}
\end{Schunk}
The implementation is deceptively simple, but warrants some explanation. The funcion accepts \code{xk} -  the value of the \code{k}'th coordinate - and inserts it into the $K$-dimensional vector \code{x}. It then evaluates and returns the joint distribution \code{f} at \code{x} (fixed arguments are passed via \code{...}). Returning the joint distribution in lieu of the conditional distribution is correct because, from the perspective of each coordinate, the two are proportional:
\begin{equation}
p(x_k | \xx_{\setminus k}) = \frac{p(x_k , \xx_{\setminus k})}{p(\xx_{\setminus k})} \propto p(x_k , \xx_{\setminus k})
\end{equation}
where the first step follows from the Product Rule and in the second step, we have taken advantage of the fact that $p(\xx_{\setminus k})$ is constant in terms of $x_k$. A multiplicative constant for density translates into an additive constant for log-density, and can be safely ignored in most MCMC algorithms, including the slice sampler and ARS.

Functions \code{MfU.fgEval.f} and \code{MfU.fgEval.g} produce log-density and its gradient for each coordinate, to be consumed by ARS. Since \proglang{R} package \pkg{ars} expects vector forms of input and output for log-density and its gradient, the above two functions implement such vectorization.

\section[Using MfUSampler]{Using \pkg{MfUSampler}}\label{section-usage}
In this section, we illustrate the use of \pkg{MfUSampler} using both slice sampler and ARS as underlying univariate distributions.
\subsection{Example 1: Bayesian Logistic Regression}\label{subsection-usage-example}
We use a Bayesian logistic regression problem ($N$ observations, $K$ coefficients) as an example. The DAG corresponding to this simple problem is shown in Figure~\ref{figure-logistic-dag}. We assume a non-informative, Gaussian prior on each of the $K$ regression coefficients in $\bbeta$ (assuming proper scaling of the covariate matrix $\XX$), using a mean of $\mu=0.0$ and a standard deviation of $\sigma=1e+6$. The full joint distribution corresponding to this DAG is given in Equation~\ref{equation-logprob-logistic}.

\begin{figure}[t]
\centering \includegraphics[width=0.4\textwidth]{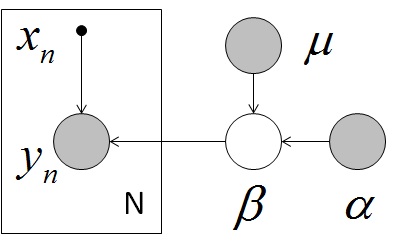} 
\caption{
Directed Acyclic Graph representing a Bayesian logistic regression problem, with $N$ observations and $K$ coefficients. A non-informative Gaussian prior is imposed on each of the $K$ elements of $\bbeta$ using mean $\mu=0.0$ and standard deviation $\sigma=$1e+6. Plate notation is used to show the links from $\bbeta$ to each of the $N$ response values $y_n$.
}
\label{figure-logistic-dag}
\end{figure}

\begin{equation} \label{equation-logprob-logistic}
L(\bbeta) = - \sum_{n=1}^N \left\{ (1-y_n) \xx_n^t \bbeta + \llog(1 + \exp(-\xx_n^t \bbeta)) \right\} - \frac{1}{2 \sigma^2} \sum_{k=1}^K (\beta_k - \mu)^2 + C,
\end{equation}
where the first term corresponds to the log-likelihood, the second term represents the log-prior, and $C$ captures any terms that are independent of $\bbeta$. Note that, while the log-prior term is separable in $\bbeta$, same is not true for log-likelihood. This means the joint log-distribution $L(\bbeta)$ cannot be written as $\sum_k L_k(\beta_k)$. Therefore, conditional distributions used in each of the $K$ steps in a Gibbs sampling cycle will depend on the latest draws from the other $K-1$ elements.

\subsection[]{Slice Sampling using \pkg{MfUSampler}}\label{subsection-usage-slice}
First, we must load the package into an \proglang{R} session. We also set the random seed for reporduciblity of results:
\begin{Schunk}
\begin{Sinput}
R> library("MfUSampler")
R> set.seed(0)
\end{Sinput}
\end{Schunk}
Applying slice sampler to a log-density in \pkg{MfUSampler} is quite straightforward. First we must implement a function that returns the log-density. For the log-density of Equation~\ref{equation-logprob-logistic}, we can do:
\begin{Schunk}
\begin{Sinput}
R> logit.f <- function(beta, X, y, mu=0.0, sigma=1e+6) {
+    Xbeta <- X 
+    return (-sum((1-y) * Xbeta + log(1 + exp(-Xbeta))) 
+            - sum((beta - mu)^2)/(2*sigma^2))
+  }
\end{Sinput}
\end{Schunk}
where we have ignored the constant term $C$. Note that the first argument to \code{logit.f} is the argument for which we want to draw samples, which is \code{beta} in this case. Also note that \code{X} is a $N$-by-$K$ matrix whose n'th row corresponds to $\xx_n$ in Equation~\ref{equation-logprob-logistic}. To test \pkg{MfUSampler} on \code{logit.f}, we first generate some simulated data:
\begin{Schunk}
\begin{Sinput}
R> N <- 1000
R> K <- 5
R> X <- matrix(runif(N*K, -0.5, +0.5), ncol=K)
R> beta <- runif(K, -0.5, +0.5)
R> y <- 1*(runif(N) < 1/(1+exp(-X 
\end{Sinput}
\end{Schunk}
We now initialize $\bbeta$ with zeros, draw samples from \code{logit.f}, and save the sampled coefficients to \code{beta.smp}:
\begin{Schunk}
\begin{Sinput}
R> nsmp <- 100
R> beta.ini <- rep(0.0, K)
R> beta.smp <- array(NA, dim=c(nsmp,K))
R> for (i in 1:nsmp) {
+    beta.ini <- MfU.Sample(beta.ini, f=logit.f, uni.sampler="slice", X=X, y=y)
+    beta.smp[i,] <- beta.ini
+  }
\end{Sinput}
\end{Schunk}
We can compare the mean of the sampled coefficients (\code{beta.mcmc}) with the MLE estimate (\code{beta.glm}), throwing away the first half of the samples for burn-in:
\begin{Schunk}
\begin{Sinput}
R> beta.mcmc <- colMeans(beta.smp[(nsmp/2+1):nsmp,])
R> beta.glm <- glm(y~X-1, family="binomial")$coefficients
R> cbind(beta.glm,beta.mcmc)
\end{Sinput}
\begin{Soutput}
     beta.glm  beta.mcmc
X1 -0.4265100 -0.4013649
X2 -0.5416799 -0.5814078
X3 -0.1684548 -0.1717733
X4 -0.4056165 -0.4323484
X5  0.4914152  0.5592871
\end{Soutput}
\end{Schunk}
Increasing \code{nsmp} should bring \code{beta.mcmc} closer to \code{beta.glm}.

\subsection[]{Adaptive Rejection Sampling using \pkg{MfUSampler}}\label{subsection-usage-ars}
In order to perform ARS using \pkg{MfUSampler}, we must construct a function that outputs either the log-density or its gradient vector, depending on the value of the boolean argument \code{grad}:
\begin{Schunk}
\begin{Sinput}
R> logit.fg <- function(beta, X, y, mu=0.0, sigma=1e+6, grad) {
+    Xbeta <- X 
+    if (grad) return (t(X) 
+                      - (beta - mu)/(2*sigma^2))
+    return (logit.f(beta, X, y, mu, sigma))
+  }
\end{Sinput}
\end{Schunk}
We can now apply \code{MfU.Sample}, this time setting \code{uni.sampler} argument to \code{"ars"}:
\begin{Schunk}
\begin{Sinput}
R> beta.ini <- rep(0.0, K)
R> beta.smp <- array(NA, dim=c(nsmp,K))
R> for (i in 1:nsmp) {
+    beta.ini <- MfU.Sample(beta.ini, f=logit.fg, uni.sampler="ars", X=X, y=y)
+    beta.smp[i,] <- beta.ini
+  }
R> beta.mcmc <- colMeans(beta.smp[(nsmp/2+1):nsmp,])
R> cbind(beta.glm,beta.mcmc)
\end{Sinput}
\begin{Soutput}
     beta.glm  beta.mcmc
X1 -0.4265100 -0.4118904
X2 -0.5416799 -0.4960775
X3 -0.1684548 -0.1691510
X4 -0.4056165 -0.4288610
X5  0.4914152  0.4751731
\end{Soutput}
\end{Schunk}
Again, larger values of \code{nsmp} would lead to closer match between \code{beta.mcmc} and \code{beta.glm}.

So far, we ignored the control parameters of Slice sampler and ARS, thus falling back to the default values provided for each algorithm. Sometimes it is necessary to override the default values, which is possible by calling \code{MfU.Control}. To see a description of the control parameters, type \code{?MfU.Control} in your \proglang{R} session. More details on each algorithm and its tuning parameters can be found by consulting the paper~\citep{neal2003slice} for the Slice sampler, and the software documentation~\citep{rodrigues2014ars} as well as the paper~\citep{gilks1992adaptive} for ARS.

The Bayesian logistic regression example is somewhat contrived. For example, the log-posterior of Equation~\ref{equation-logprob-logistic} can be shown to have a negative-definite Hessian matrix and therefore log-concave, thus being eligible for the multivariate Stochastic Newton Sampling~\citep{mahani2014sns}. The real power of \pkg{MfUSampler}, however, comes from its application to novel Bayesian models where developers seek to rapidly prototype flexible DAGs to validate their model specification, while having a path towards high-peformance applications. Section~\ref{section-workflow} contains one such example.

\section[]{\pkg{MfUSampler} for Bayesian Prototyping}\label{section-workflow} 
With \pkg{MfUSampler}, prototyping a novel Bayesian model can be quite fast. Encapsulation of Gibbs cycles inside the \code{MfU.Sample} function circumvents subtle bugs during implementation of Gibbs sampling, and allows the researcher to focus his/her mental power on other, more important complexities of the problem such as model specification, MCMC diagnostics, etc.

\subsection{Example 2: Heteroscedastic Linear Regression}\label{subsection-example-heteroscedastic}
Consider a linear regression problem where we suspect heteroscedasticity of regression residuals, and that the non-uniform residual variance is itself dependent on the same covariates that are used to explain the mean response. To model such behavior, we assume the $N$ response measurements, $y_n$,  are independently drawn from this normal distribution:
\begin{equation} \label{equation-regression-heteroscedastic}
y_n \sim \mathcal{N}(\xx_n^t \bbeta, \sigmamax^2/(1+\exp(-\xx_n^t \ggamma)))
\end{equation}
This specification digresses from ordinary linear regression by making the variance parameter data-dependent. The nonlinear transformation ensures that the variance for all points is bounded between $0$ and $\sigmamax^2$. Given the matrix of covariates $\XX$ and the response vector $\yy$, our goal is to estimate $\bbeta$, $\ggamma$ and $\sigmamax$. There can be several reasons for our interest in a Bayesian treatment of this problem, and in drawing samples from the joint posterior distribution on $(\bbeta, \ggamma, \sigmamax)$: 1) future extension of the model in a probabilistic framework, e.g. using a Hierrarchical Bayesian approach to pool data across heterogneous observation units, 2) our suspicion that the posterior distribution is not globally convex, has a complex shape with multiple possible maxima, and thus sampling the its entire landscape can be safer than looking for a single, local optimum, and 3) fully probabilistic treatment of parameter estimation as well as response prediction.

To avoid clutter, we assume non-informative priors on all parameters and ignore them to focus only on the log-likelihood function corresponding to Equation~\ref{equation-regression-heteroscedastic}, which can be implemented using this \proglang{R} function:
\begin{Schunk}
\begin{Sinput}
R> loglike <- function(beta, gamma, sigmamax, X, y) {
+    mean.vec <- X
+    sd.vec <- sigmamax/sqrt(1+exp(-X
+    return (sum(dnorm(y, mean.vec, sd.vec, log=TRUE)))
+  }
\end{Sinput}
\end{Schunk}
To feed the log-likelihood function into \code{MfU.Sample}, we write a thin wrapper around it:
\begin{Schunk}
\begin{Sinput}
R> loglike.wrapper <- function(coeff, X, y) {
+    K <- ncol(X)
+    beta <- coeff[1:K]
+    gamma <- coeff[K+1:K]
+    sigmamax <- coeff[2*K+1]
+    return (loglike(beta, gamma, sigmamax, X, y))
+  }
\end{Sinput}
\end{Schunk}
We can now generate some data and draw samples from the resulting log-likelihood. Note the use of \code{MfU.Control} to set a lower bound of \code{1e-3} on \code{sigmax}:
\begin{Schunk}
\begin{Sinput}
R> # generate simulated data
R> K <- 5
R> N <- 1000
R> X <- matrix(runif(N*K, -0.5, +0.5), ncol=K)
R> beta <- runif(K, -0.5, +0.5)
R> gamma <- runif(K, -0.5, +0.5)
R> sigmamax <- 0.75
R> mu <- X
R> var <- sigmamax^2/(1+exp(-X
R> y <- rnorm(N, mu, sqrt(var))
R> # initialize and sample
R> coeff <- c(rep(0.0, 2*K), 0.5)
R> mycontrol <- MfU.Control(n = 2*K+1, slice.lower = c(rep(-Inf,2*K), 0.001))
R> coeff.smp <- array(NA, dim=c(nsmp, 2*K+1))
R> t <- proc.time()[3]
R> for (i in 1:nsmp) {
+    coeff <- MfU.Sample(coeff, f=loglike.wrapper, X=X, y=y, control = mycontrol)
+    coeff.smp[i,] <- coeff
+  }
R> t <- proc.time()[3]-t
R> cat("time:", t, "\n")
\end{Sinput}
\begin{Soutput}
time: 1.718 
\end{Soutput}
\begin{Sinput}
R> beta.est <- colMeans(coeff.smp[(nsmp/2+1):nsmp, 1:K])
R> gamma.est <- colMeans(coeff.smp[(nsmp/2+1):nsmp, K+1:K])
R> sigmamax.est <- mean(coeff.smp[(nsmp/2+1):nsmp, 2*K+1])
R> cbind(beta, beta.est, gamma, gamma.est)
\end{Sinput}
\begin{Soutput}
            beta    beta.est     gamma  gamma.est
[1,] -0.08169418  0.01082307 0.3845259  0.3442652
[2,] -0.36962596 -0.33175088 0.4858229  0.4563613
[3,]  0.46721080  0.42836681 0.2954812  0.1705552
[4,] -0.17247123 -0.25379361 0.3079317  0.3556921
[5,]  0.33722306  0.23397001 0.1617440 -0.3678595
\end{Soutput}
\begin{Sinput}
R> c(sigmamax, sigmamax.est)
\end{Sinput}
\begin{Soutput}
[1] 0.750000 0.753041
\end{Soutput}
\end{Schunk}
Note that we applied \code{MfU.Sample} to \code{sigmamax} even though it is a scalar rather than a vector. In principle, we could directly call the slice sampler, but using the same higher level function \code{MfU.Sample} call keeps the code more organized and easier to track. Despite \code{nsmp} being small, we see general agreement between actual and estimated parameters, especially for \code{sigsq} and \code{beta}. Of course, proper MCMC diagnostics including trace plot examination, histogram examination, and effective size calculation must be done to adjust sampling parameters and determine the next step in the modeling process, including model re-specification.

\subsection[]{Performance Improvement Techniques}\label{subsection-workflow-efficiency}
Feeding the full joint distribution for a DAG into \code{MfU.Sample} is often a good, first step but it can be computationally brute-force. There are several opportunities for performance improvement once the foundation is laid and the model structure is somewhat validated. The root-cause of inefficiency in our brute-force approach is that we are evaluating the full, joint density during univariate sampling of each coordinate of the state space. This may not be necessary for several reasons:
\begin{enumerate}
\item For some variables or variable groups, exact sampling techniques may be possible. This often arises when the likelihood and prior terms in a hierarchical model are conjugate, making the posterior distribution have the same functional form as the likelihood. When conjugacy allows for exact sampling, it is often the preferred route compared to MCMC sampling~\citep{robert1999monte}.
\item If some variables (or groups of variables) are conditionally-independent, their joint distribution, conditioned on the remaining variables, is separable~\citep{wilkinson2006parallel}. Therefore, while sampling each variable, we only need to evaluate a subset of additive terms comprising log-likelihood. This situation can arise, for example, in Hierarchical Bayesian regression models~\citep{rossi2003bayesian,gelman2006data}. In addition to permitting lighter conditional posterior evaluations, conditional independence can also be taken advantage of in parallel Gibbs sampling~\citep{wilkinson2006parallel}.
\item Even when conditional independence does not exist, we may still find opportunities to drop some of the additive terms in the log-posterior for all or a subset of the variables, thereby reducing the time needed to evaluate conditional posteriors during Gibbs cycles.
\end{enumerate}

In the above cases, the general strategy is to split the full state space into subspaces and apply the more efficient techniques within each subspace. Bayesian compilers are effective to varying degrees at identifying and taking advantage of such optimization opportunities. A detailed discussion of these topics is beyond the scope of this paper. Here we illustrate the last item in the above list, continuing the heteroscedastic linear regression example of Section~\ref{subsection-example-heteroscedastic}.

The \code{loglike} function of Section~\ref{subsection-example-heteroscedastic} can be expanded as below (ignoring constant terms):
\begin{eqnarray}\label{equation-heteroscedastic-terms}\nonumber
L(\bbeta, \ggamma, \sigmamax) &=& -N \llog \sigmamax + \frac{1}{2} \sum_{n=1}^N \llog \left( 1 + \exp(- \xx_n^t \ggamma) \right) \\
&& - \frac{1}{2 \sigmamax^2} \sum_{n=1}^N \left\{ (y_n - \xx_n^t \bbeta)^2 \left( 1 + \exp (- \xx_n^t \ggamma) \right) \right\}
\end{eqnarray}
We see that, of the above three additive terms, only the last term depends on all three variable blocks $\bbeta$, $\ggamma$ and $\sigmamax$, while the first two depend only on $\sigmamax$ and $\ggamma$, respectively. Taking advantage of this, we can write simplified conditional log-likelihood functions as below:
\begin{align}
\left\{
\begin{array}{lll}
L(\bbeta \,|\, \ggamma, \sigmamax) &= - \frac{1}{2 \sigmamax^2} \sum_{n=1}^N \left\{ (y_n - \xx_n^t \bbeta)^2 \left( 1 + \exp (- \xx_n^t \ggamma) \right) \right\} \\ \\
L(\ggamma \,|\, \bbeta, \sigmamax) &=  \frac{1}{2} \sum_{n=1}^N \llog \left( 1 + \exp(- \xx_n^t \ggamma) \right) - \frac{1}{2 \sigmamax^2} \sum_{n=1}^N \left\{ (y_n - \xx_n^t \bbeta)^2 \left( 1 + \exp (- \xx_n^t \ggamma) \right) \right\} \\ \\
L(\sigmamax \,|\, \bbeta, \ggamma) &= -N \llog \sigmamax - \frac{1}{2 \sigmamax^2} \sum_{n=1}^N \left\{ (y_n - \xx_n^t \bbeta)^2 \left( 1 + \exp (- \xx_n^t \ggamma) \right) \right\}
\end{array}
\right.
\end{align}
Referring to the three terms on the right-hand side of Equation~\ref{equation-heteroscedastic-terms} as components 1-3, the \proglang{R} implementation of the above conditional distributions will be:
\begin{Schunk}
\begin{Sinput}
R> loglike.component1 <- function(sigmamax, N) -N*log(sigmamax)
R> loglike.component2 <- function(gamma, X) 0.5*sum(log(1 + exp(-X
R> loglike.component3 <- function(beta, gamma, sigmamax, X, y) {
+    -sum((y-X
+  }
R> loglike.beta <- function(beta, gamma, sigmamax, X, y) {
+    loglike.component3(beta, gamma, sigmamax, X, y)
+  }
R> loglike.gamma <- function(gamma, beta, sigmamax, X, y) {
+    loglike.component2(gamma, X) +
+      loglike.component3(beta, gamma, sigmamax, X, y)
+  }
R> loglike.sigmamax <- function(sigmamax, beta, gamma, sigma, X, y) {
+    loglike.component1(sigmamax, nrow(X)) +
+      loglike.component3(beta, gamma, sigmamax, X, y)
+  }
\end{Sinput}
\end{Schunk}
Each Gibbs cycle will be broken into 3 steps, corresponding to $\bbeta$, $\ggamma$ and $\sigmamax$:
\begin{Schunk}
\begin{Sinput}
R> beta.ini <- rep(0.0, K)
R> gamma.ini <- rep(0.0, K)
R> sigmamax.ini <- 0.5
R> mycontrol.sigmamax <- MfU.Control(n = 1, slice.lower = 0.001)
R> coeff.smp <- array(NA, dim=c(nsmp, 2*K+1))
R> t <- proc.time()[3]
R> for (i in 1:nsmp) {
+    beta.ini <- MfU.Sample(beta.ini, loglike.beta, gamma=gamma.ini
+                   , sigmamax=sigmamax.ini, X=X, y=y)
+    gamma.ini <- MfU.Sample(gamma, loglike.gamma, beta=beta.ini
+                   , sigmamax=sigmamax.ini, X=X, y=y)
+    sigmamax.ini <- MfU.Sample(sigmamax, loglike.sigmamax
+                   , beta=beta.ini, gamma=gamma.ini
+                   , X=X, y=y, control = mycontrol.sigmamax)
+    coeff.smp[i,] <- c(beta.ini, gamma.ini, sigmamax.ini)
+  }
R> t <- proc.time()[3]-t
R> cat("time:", t, "\n")
\end{Sinput}
\begin{Soutput}
time: 1.589 
\end{Soutput}
\begin{Sinput}
R> beta.est <- colMeans(coeff.smp[(nsmp/2+1):nsmp, 1:K])
R> gamma.est <- colMeans(coeff.smp[(nsmp/2+1):nsmp, K+1:K])
R> sigmamax.est <- mean(coeff.smp[(nsmp/2+1):nsmp, 2*K+1])
R> cbind(beta, beta.est, gamma, gamma.est)
\end{Sinput}
\begin{Soutput}
            beta      beta.est     gamma  gamma.est
[1,] -0.08169418  0.0009252645 0.3845259  0.3583743
[2,] -0.36962596 -0.3330523615 0.4858229  0.4546368
[3,]  0.46721080  0.4348825919 0.2954812  0.0726346
[4,] -0.17247123 -0.2702465415 0.3079317  0.2755692
[5,]  0.33722306  0.2422290123 0.1617440 -0.3499572
\end{Soutput}
\begin{Sinput}
R> c(sigmamax, sigmamax.est)
\end{Sinput}
\begin{Soutput}
[1] 0.7500000 0.7532132
\end{Soutput}
\end{Schunk}
In this case, the time savings from our improvements is modest, but for other problems the impact can be more pronounced. For example, for HB regression problems the speedup from breaking down the conditional posterior across regression groups will roughly be proportional to the number of groups, even before applying any parallelization (which could theoretically offer another multiplicative speedup equal to number of groups, given sufficient number of processing cores available).

For statistical problems of moderate to large size (i.e. number of observations and/or covariates) and in the absence of conjugacy for coefficients that are directly involved in explaining the response, the majority of time is often spent in log-density evaluations, rather than other activities such as random number generation or the sampling algorithm itself. Therefore, the next most rewarding optimization step is likely to be porting of log-density functions to high-performance languages such as \proglang{C}, \proglang{C++}, \proglang{FORTRAN}. For large problems, even parallel hardware such as Graphic Processing Units (GPUs) can be utilized by writing log-density functions in languages such as \proglang{CUDA}, while continuing to take advantage of \pkg{MfUSampler} for sampler control logic. Minimizing data movement between processor and co-processor is a key performance factor in such cases. Finally, should further performance improvement necessitate a rewrite of the \pkg{MfUSampler} logic in a high-peformance language, the package source code can be used as a blue-print for efficient development.

\section{Summary}\label{section-summary}
The \proglang{R} package \pkg{MfUSampler} enables MCMC sampling of multivariate distributions using univariate algorithms. It relies on an extension of Gibbs sampling from univariate independent samplig to univariate Markov transitions, and proportionality of conditional and joint distributions. By encapsulating these two concepts in a library, it reduces the possibility of subtle mistakes by researchers while re-implementing the Gibbs sampler and thus allows them to focus on other, more innovative aspects of their Bayesian modeling. Brute-force application of \pkg{MfUSampler} allows researchers to get their project off the ground, maintain full control over model specification, and utilize robust univariate samplers. This can be followed by an incremental optimization approach by taking advantage of DAG properties such as conjugacy, conditional independence and by porting log-density functions to high-peformance languages and hardware.

\bibliography{MfUSampler}
\end{document}